\theoremstyle{plain}
\newtheorem{thm}{\protect\theoremname}
  \theoremstyle{remark}
  \newtheorem{rem}[thm]{\protect\remarkname}
  \providecommand{\remarkname}{Remark}
\providecommand{\theoremname}{Theorem}
\begin{document}

\title{The Asymptotic Cooling of Heat-Bath Algorithmic Cooling}

\author{Sadegh Raeisi}
\email[]{sraeisi@uwaterloo.ca}
\affiliation{Institute for Quantum Computing, University of Waterloo, Ontario, Canada}
\affiliation{Department of Physics and Astronomy, University of Waterloo, Ontario, Canada}
\author{Michele Mosca}
\affiliation{Institute for Quantum Computing, University of Waterloo, Ontario, Canada}
\affiliation{Department of Combinatorics and Optimization, University of Waterloo, Ontario, Canada}
\affiliation{Perimeter Institute for Theoretical Physics, Waterloo, Ontario, Canada}
\affiliation{Canadian Institute for Advanced Research, Toronto, Ontario, Canada}

\begin{abstract}
The purity of quantum states is a key requirement for many quantum
applications. Improving the purity is limited by fundamental
laws of thermodynamics.
Here we are probing the fundamental limits for a natural approach to this
problem, namely heat-bath algorithmic cooling(HBAC). 
The existence of the cooling limit for HBAC techniques was proved by
Schulman et al.,
the limit however remained unknown for the past decade. Here for the first time we establish this limit. In the context of quantum thermodynamics, this corresponds
to the maximum extractable work from the quantum system.
We also establish, in the case of higher dimensional reset systems, how the performance of HBAC depends on the energy spectrum
of the reset system.
\end{abstract}

\maketitle
The purity of quantum states is often one of the limiting factors in many
applications and quantum technologies. For instance, the signal to noise
ratio (SNR) in spectroscopy and medical imaging \cite{ardenkjaer-larsen_increase_2003,hall_polarization-enhanced_1997,kurhanewicz2011analysis,nelson2013metabolic,ross2010hyperpolarized} or
the resolution in metrology and quantum sensing \cite{PhysRevLett.100.133601, PhysRevLett.107.113603, riedel2010atom,ye2008quantum}
are often limited by the purity of the quantum states.
High purity is also a necessity for quantum computation.
Fault-tolerant quantum computing relies on using fresh ancillary quantum bits.
Recently Ben-Or, Gottesman and Hassidim proposed a quantum refrigerator to prepare high purity
quantum states for this purpose using algorithmic cooling \cite{ben2013quantum}.

Different methods have been exploited to improve the purity but all of
these techniques are limited by the laws of thermodynamics \cite{horodecki2013fundamental, PhysRevE.84.041109}.
It is interesting both fundamentally and practically to understand these limits.
In the context of quantum thermodynamics, extracting work from
a quantum system is equivalent to increasing its purity and cooling it
\cite{horodecki_reversible_2003} and cooling limits correspond to Carnot-like
efficiency limits. Quantum thermodynamics has been studied as a resource theory
of purity \cite{ horodecki_reversible_2003, gour2013resource, brandao_resource_2013}
and recently Horodecki and Oppenheim extended this paradigm for general
thermodynamic transformations.
They found the limit for the extractable work in terms of relative
entropy when the Hamiltonian of the process is time independent. Usually quantum applications
involve quantum control which means that the Hamiltonian is
time-dependent, and in these cases their result gives an upper bound.

Heat-bath algorithmic cooling is another method which takes a more practical
approach to the cooling problem. Here a natural subclass of general
thermodynamic transformations  is considered where we have
control over a part of the system, and have limited control over how
the system interacts with an external heat-bath \cite{PhysRevE.84.041109,
weimer_local_2008}. This model applies to a wide range of
quantum implementation techniques like nuclear magnetic resonance
(NMR) \cite{ ryan2008spin,baugh2005experimental,brassard_experimental_2014}, ion-traps
\cite{ion-trap-cooling} and recently in quantum optics \cite{xu_demon-like_2014}. 
The HBAC methods have also been studied from the thermodynamic
viewpoint \cite{rempp_cyclic_2007,weimer_local_2008}.


Here we consider a quantum system that is in interaction
with a heat-bath. The quantum system comprises two kind of qubits,
the computation qubits and the reset qubits. The computation qubits
are the high quality qubits with long decoherence time that are
used for computation. The reset qubits on the other hand have
shorter relaxation time and equilibrate fast. 
%
 Figure \ref{fig:model}
shows a schematic of the model that we are considering in our work.

This model applies to a variety of physical systems. For instance, in NMR,
the system is the few nuclear spins that can be controlled and the heat-bath 
comprises the other magnetic moments in the sample. These magnetic moments
couple to the nuclear spins in the system and eventually equilibrate them.
Different spin species have different coupling rates \cite{ryan2008spin,
baugh2005experimental}. 

\begin{figure}
\begin{centering}
\includegraphics[width=0.95\columnwidth]{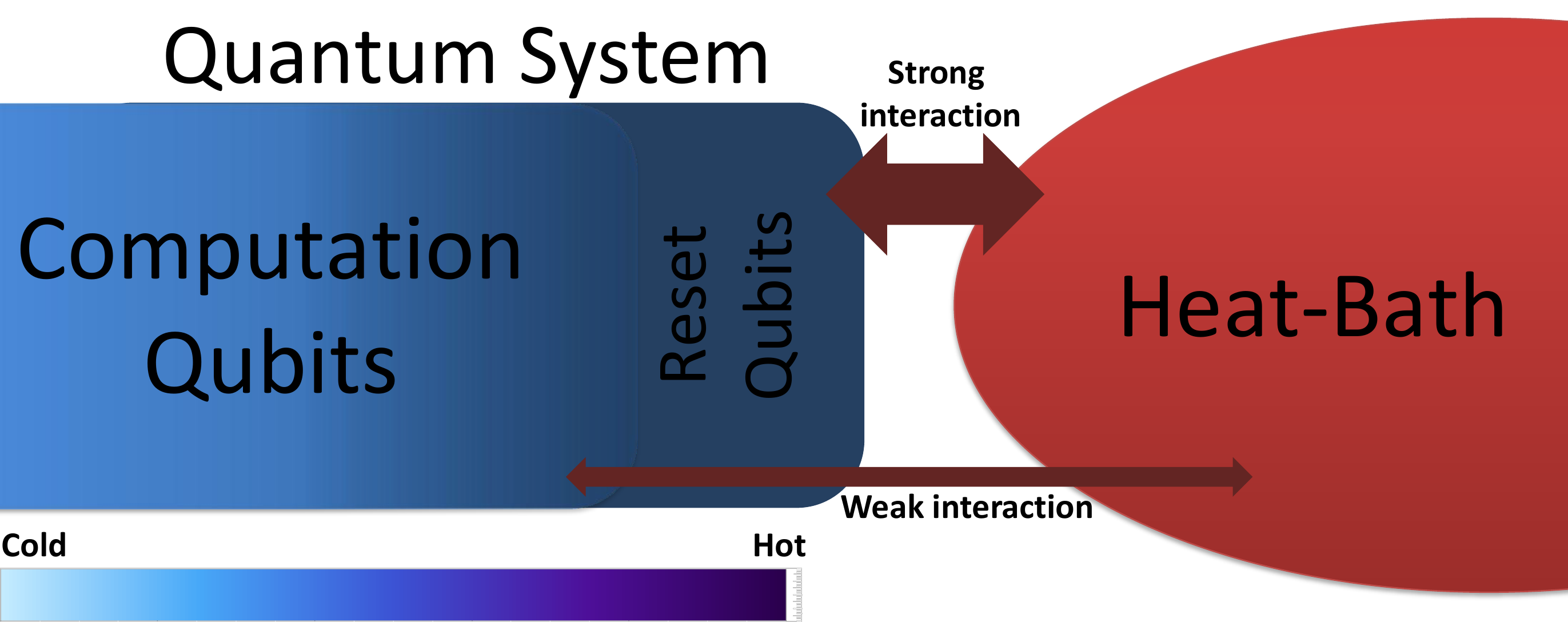}
\par\end{centering}
\caption{\label{fig:model} The schematic of the model. The quantum
system comprises computation qubits and reset qubits and interacts with a heat-bath.
The heat-bath incorporates degrees of freedom in the environment
that couple to the qubits in the quantum system.
Usually, different qubits couple differently to these degrees of freedom.
The computation qubits interact weakly and the reset qubits interact strongly with the heat-bath. We ignore the weak interaction between the computation qubits and the heat-bath and assume that only the reset qubits are effected by the interaction with the heat-bath.
The goal is to cool down the qubits in the system.
Note that this is just a schematic and in reality they are not necessarily spatially arranged in this way.
The HBAC does not cool all the qubits to the same temperature and
the asymptotic temperature of different computation qubits would be different.
We find the asymptotic state and consequently the temperature for all the 
qubits including the first one which is the cooling limit for all the 
HBAC techniques. }
\end{figure}

The class of cooling transformations that we are considering here
are known as heat-bath algorithmic cooling \cite{boykin_algorithmic_2002,
schulman_physical_2005,baugh2005experimental}. HBAC
is a quantum computation technique for  cooling
computation qubits by transferring their entropy to the reset qubits.
The reset qubits are regularly refreshed through their interaction with
the heat-bath.

The original idea of algorithmic cooling was developed by Schulman and
Vazirani in \cite{schulman_scalable_1998}
which uses a technique for Schumacher's quantum data compression
\cite{PhysRevA.54.2636,kaye2007cooling}. 
%
Later it was proposed to use a heat-bath to enhance the cooling beyond
the compression bounds\cite{boykin_algorithmic_2002,fernandez2004algorithmic}.
The idea is that after the entropy transfer, the heat-bath refreshes
the hot qubits and then the entropy transfer can be repeated. Different
iterative methods were developed based on this idea\cite{elias2007optimal,
elias_semioptimal_2011,kaye2007cooling}.
All of these methods are  referred to as ``Heat-Bath Algorithmic Cooling''.

In \cite{schulman_physical_2005} Schulman et al. established a lower-bound for the asymptotic temperature
and proved that none of these iterative techniques can extract all
of the entropy from the computation qubits. However, the asymptotic cooling limit remained 
unknown. In \cite{elias2007optimal}, a steady state of HBAC was identified and was used 
to establish an upper-bound for the limit under the assumption that HBAC starts from 
the maximally mixed state and converges to a steady state.

In this work, we show that this process has
an asymptotic state and find this asymptotic
state of the computation and reset qubits.
This gives the cooling limit
 of the qubits in this framework. This fundamental limit sets the ultimate
limit of any practical cooling approach under similar constraints.

We use the technique that was introduced in \cite{schulman_physical_2005}.
It is called the ``Partner Pairing Algorithm (PPA)'' and is the
optimal technique for HBAC. We find the cooling limit for the PPA and as
it is the optimal technique, the limit applies to all the HBAC techniques as well.

The PPA is an iterative method. In each iteration,
the diagonal elements of the density matrix are sorted and then
the reset qubit is refreshed. 
For example, if we have $n=1$ computational qubits, plus one reset qubit, with combined probabilities corresponding to 0.45 for $|00\rangle$, 0.15 for $|01\rangle$, 0.3 for $|10\rangle$ and 0.1 for $|11\rangle$, then the sort step will swap $|01\rangle$ and $|10\rangle$. After this swap step, the probabilities of the computational basis states are in decreasing order with respect to the lexicographic ordering of the qubits, which corresponds to increasing the probability of a 0 in the leftmost qubit.
The reset process is equivalent to
\begin{equation}
R(\rho)=\textrm{Tr}_{R}(\rho)\otimes\rho_{R}.\label{eq:reset-process}
\end{equation}
$\textrm{Tr}_{R}\left(*\right)$ is the partial trace over the reset
qubit and $\rho_{R}=\frac{1}{e^{-\epsilon}+e^{\epsilon}}\left(\begin{array}{cc}
e^{\epsilon} & 0\\
0 & e^{-\epsilon}
\end{array}\right),\label{eq:reset-state}
$
is the fixed point of the reset process.
The parameter $\epsilon$
is called the polarization and $\epsilon=\frac{\Delta}{2K_{b}T_{B}}$,
where $\Delta$ is the energy gap of the reset qubit, $K_{b}$ is
the Boltzmann constant, and $T_{B}$ is the temperature of the heat-bath. Polarization
is commonly used to quantify the purity of spins. The higher the polarization,
the purer and colder the qubit is. For a qubit with the state $\rho=\left(\begin{array}{cc}
a & 0\\
0 & b
\end{array}\right)$ the polarization is given by $\frac{1}{2}\log\left(\frac{a}{b}\right)$
.

The reset step cools down the reset qubit and changes the diagonal
elements of the density matrix which also changes their ordering.
The sort operation in the following iteration would then increase
the polarization of computation qubits. Figure \ref{fig:itration}
shows the procedure of each iteration. %

\begin{figure}
\begin{centering}
\includegraphics[width=0.95\columnwidth]{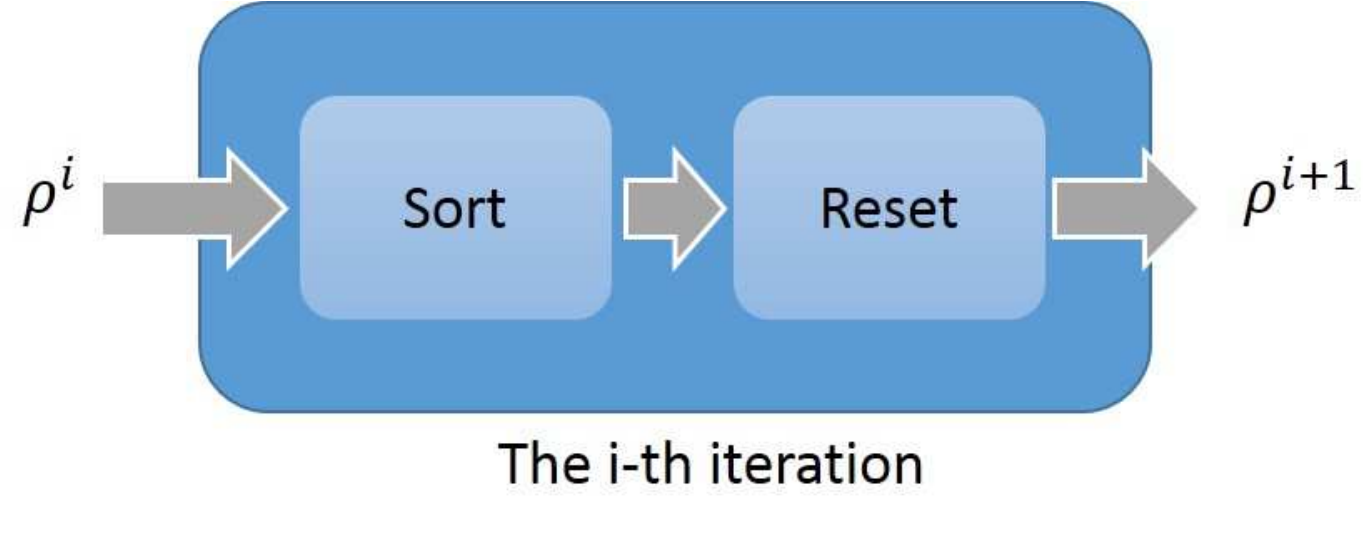}
\par\end{centering}
\caption{\label{fig:itration} The schematics of each iteration of PPA algorithmic
cooling. The diagonal elements of the density matrix are first sorted,
which increases the polarization of the first computation qubit and
decreases the polarization of the reset qubit. Next the reset process,
refreshes the reset qubit and restores its initial polarization.}
\end{figure}

We use $\left[\rho^{t}\right]=\left\{ \lambda_{1}^{(t)},\lambda_{2}^{(t)},\cdots\lambda_{2^{n+1}}^{(t)}\right\} \label{eq:diagonal_state}$
to show the state of $n$ computation qubits plus one reset qubit
which is the last one. The superscript represents the iteration index
and the subscript is the index of the diagonal elements. %

HBAC cools the first qubit monotonically which means that we just
need to find the asymptotic temperature to  find the cooling
limit. If the system converges to an asymptotic state (which
we show happens), then this state determines the cooling limit.

Despite the simplicity of each iteration, the dynamics are complicated
and it is difficult to understand how the state evolves under these
dynamics. In particular, even assuming the system converges to
an asymptotic state, it is challenging to find the asymptotic
state \cite{elias_semioptimal_2011,elias2007optimal}.
Note that the sort operation depends on the probability distribution 
and thus is changing in each iteration and
as a result, the cooling process is not a time-homogeneous
Markov process.

We use the fact that an asymptotic state should be invariant under PPA and first identify steady states of the PPA. This gives a necessary condition
for the asymptotic state. We then specify the asymptotic state by proving a
condition on the dynamics of PPA.


The asymptotic state does not change under the operations of HBAC and
is a fixed point of the dynamics.
Technically this implies that %
if it is reset, it still will be sorted. The state after the reset is
$\left[\rho^{\infty}\right]=\left\{ p_{0}^{\infty},p_{1}^{\infty},\cdots p_{2^{n}-1}^{\infty}\right\} \otimes \left[\rho_{R}\right],\label{eq:Final_State}$
where the first part represents the state of the $n$ computation
qubits. %

The fact that the full density matrix is sorted after the reset step,
implies that $ p_{i}^{\infty}e^{-\epsilon}\geq p_{i+1}^{\infty}e^{\epsilon}\,,\forall i\label{eq:inf-cond}$.

Note that this condition does not specify the asymptotic state. In fact the steady state is not unique and any state that satisfies the condition above is invariant under PPA. Therefore the invariance under PPA is a necessary condition, but not sufficient. This was also recognized in \cite{elias2007optimal} where the
state in Equation (\ref{eq:steadystate}) 
was found and shown to establish a lower bound on the asymptotic
polarization. Their numerical evidence \cite{elias2007optimal}, and other 
numerical studies independent of this work \cite{NR2014}, suggested the 
bound is tight when the initial state is maximally mixed.


One of the key elements of our work is the following theorem which specifies the steady state
that is  the asymptotic state of HBAC.
It states that while the distances between consecutive $p_i$ are increasing in PPA,
the ratio of two consecutive diagonal elements of the density matrix would never exceed $e^{2\epsilon}$.

\begin{thm} \label{Thm:Qubitmaxdist}
For PPA algorithmic cooling with a reset qubit $\left[\rho_{R}\right]=\frac{1}{e^{-\epsilon}+e^{\epsilon}}\left\{ e^{\epsilon},e^{-\epsilon}\right\} $, for any iteration $t$ and $i, 0 \leq i \leq 2^n-1$, $\frac{p_{i}^{t}}{p_{i+1}^{t}} \leq\max\left\{ e^{2\epsilon},\frac{p_{i}^{0}}{p_{i+1}^{0}}\right\} $.
\end{thm}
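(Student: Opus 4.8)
The plan is to prove the bound by reducing it to a single \emph{one-step} estimate and then telescoping. First I would record one PPA iteration explicitly on the diagonal: since the post-reset state is $\left\{p_0^t,\dots,p_{2^n-1}^t\right\}\otimes[\rho_R]$, its $2^{n+1}$ diagonal entries are the numbers $p_d^t r_0$ and $p_d^t r_1$, with $r_0=\frac{e^{\epsilon}}{e^{\epsilon}+e^{-\epsilon}}$, $r_1=\frac{e^{-\epsilon}}{e^{\epsilon}+e^{-\epsilon}}$, so that $r_0/r_1=e^{2\epsilon}$ and $r_0+r_1=1$. Sorting these into a decreasing list $\mu_0\ge\mu_1\ge\cdots$ and tracing out the reset qubit gives $p_i^{t+1}=\mu_{2i}+\mu_{2i+1}$; in particular the marginal is again sorted, $p_0^{t+1}\ge p_1^{t+1}\ge\cdots$, and (because each iteration begins with a sort) we may assume every input marginal is in decreasing order. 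The key reduction is that it suffices to establish the one-step bound
\begin{equation}
\frac{p_i^{t+1}}{p_{i+1}^{t+1}}\le\max\Big\{e^{2\epsilon},\ \frac{p_i^{t}}{p_{i+1}^{t}}\Big\}\qquad\forall i, \label{eq:onestep}
\end{equation}
since iterating (\ref{eq:onestep}) on the fixed index $i$ and using $\max\{e^{2\epsilon},\max\{e^{2\epsilon},x\}\}=\max\{e^{2\epsilon},x\}$ telescopes directly to $\frac{p_i^t}{p_{i+1}^t}\le\max\{e^{2\epsilon},p_i^0/p_{i+1}^0\}$, with the base case $t=0$ immediate.

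To prove (\ref{eq:onestep}) I would first dispose of the sums with the mediant inequality $\frac{a+b}{c+d}\le\max\{\frac{a}{c},\frac{b}{d}\}$, which yields
\[
\frac{p_i^{t+1}}{p_{i+1}^{t+1}}=\frac{\mu_{2i}+\mu_{2i+1}}{\mu_{2i+2}+\mu_{2i+3}}\le\max\Big\{\frac{\mu_{2i}}{\mu_{2i+2}},\ \frac{\mu_{2i+1}}{\mu_{2i+3}}\Big\},
\]
so it is enough to control the ratio $\mu_j/\mu_{j+2}$ of sorted entries two apart, for $j=2i,2i+1$ (note both have $\lfloor j/2\rfloor=i$). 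The engine is an order relation between the two interleaved chains $A_d:=p_d^t r_0$ and $B_d:=p_d^t r_1$: since $A_{d+1}/B_d=(p_{d+1}^t/p_d^t)\,e^{2\epsilon}$, one has $A_{d+1}\ge B_d$ exactly when the local ratio $p_d^t/p_{d+1}^t$ does not exceed $e^{2\epsilon}$.

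With this relation I would run a counting argument to locate $\mu_{j+2}$. Writing $\mu_j=A_a$, I compare against the partner value $B_a=e^{-2\epsilon}A_a$: in the regime where the relevant local ratios are $\le e^{2\epsilon}$, the relation $A_{d+1}\ge B_d$ shows that at most $2a$ merged entries are $\ge\mu_j$ (so $j\le 2a-1$), whereas at least $2a+3$ entries are $\ge B_a$; hence the $(j\!+\!3)$-rd largest entry $\mu_{j+2}$ is $\ge B_a=e^{-2\epsilon}\mu_j$, giving $\mu_j/\mu_{j+2}\le e^{2\epsilon}$. The case $\mu_j=B_a$ is analogous with $B_{a+1}\ge e^{-2\epsilon}B_a$ as the reference value. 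This settles (\ref{eq:onestep}) with the value $e^{2\epsilon}$ on the right at every position $i$ for which all the pertinent ratios stay below $e^{2\epsilon}$.

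The main obstacle is the complementary situation, in which a local ratio $p_d^t/p_{d+1}^t$ exceeds $e^{2\epsilon}$ and the clean relation $A_{d+1}\ge B_d$ fails: there a ``large gap'' cuts the chain and the pair $(A_d,B_d)$ stays together through the sort. I expect the same counting to go through but with the reference value degraded by exactly this offending ratio, so that for $j\in\{2i,2i+1\}$ the bound on $\mu_j/\mu_{j+2}$ becomes $\max\{e^{2\epsilon},\,p_i^t/p_{i+1}^t\}$ rather than $e^{2\epsilon}$ — precisely the right-hand side of (\ref{eq:onestep}). Making rigorous the matching of the offending index $d$ to the target $i=\lfloor j/2\rfloor$, and disposing of the boundary ambiguities created by ties (equal diagonal entries, where the sorted positions are defined only up to permutation), is the delicate step where I would spend most of the effort; I would handle ties either by a fixed tie-breaking convention with careful $\ge$/$>$ bookkeeping in the counts, or by a perturbation-and-continuity argument starting from the generic all-distinct configuration.
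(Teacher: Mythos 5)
Your outer skeleton is the same as the paper's: reduce everything to the one-step bound $\frac{p_i^{t+1}}{p_{i+1}^{t+1}}\le\max\{e^{2\epsilon},\,p_i^{t}/p_{i+1}^{t}\}$ and induct on $t$. The inner mechanism, however, is genuinely different. The paper (in the supplementary proof of Theorem~\ref{Thm:maxdist}, of which Theorem~\ref{Thm:Qubitmaxdist} is the $k=2$ instance) splits the sort into two block-sorts (indices $\le i$ and $\ge i+1$) followed by a merge, bounds the values composing $p'_i$ from above by $p_i a_j$ and those composing $p'_{i+1}$ from below by $p_{i+1}a_j$ (Remarks~\ref{Crossing-from-above} and~\ref{Crossing-from-below}), and closes with an algebraic case analysis on the number $r$ of entries that cross in the merge. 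You instead work on the fully merged sorted list $\mu_0\ge\mu_1\ge\cdots$, use the mediant inequality to reduce to the two-apart ratios $\mu_j/\mu_{j+2}$ for $j\in\{2i,2i+1\}$, and control these by threshold counting on the interleaved chains $A_d=p_d^t r_0$, $B_d=p_d^t r_1$. Your mediant reduction and your counting in the clean regime (pertinent local ratios $\le e^{2\epsilon}$) are correct, including the off-by-one bookkeeping ($j\le 2a-1$ and $\mu_{2a+2}\ge B_a$ when $\mu_j=A_a$). One trade-off: the paper's decomposition applies verbatim to qudit resets, which it needs for the $D$-level results; your two-chain counting is specific to a qubit reset, though arguably more elementary.

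The step you defer --- the regime where some local ratio exceeds $e^{2\epsilon}$ --- is the actual crux of the theorem, so as written your text is a program rather than a proof there; but your expectation is correct, and your own counting closes it. The missing observation is that the sorted-position constraint forces the obstruction to sit exactly at the target index $i$. Concretely, for $j=2i$: if $\mu_{2i}=A_a$, counting entries above it gives $a+\#\{d:B_d\ge A_a\}=2i$ with $\#\{d:B_d\ge A_a\}\le a$, hence $a\ge i$. Each $p_d$ in the window $[p_a e^{-2\epsilon},p_a e^{2\epsilon})$ contributes exactly one merged entry to $[\mu_{2i}e^{-2\epsilon},\mu_{2i})$; if $a>i$ the count places at least two indices $d\ne a$ in that window, so $\mu_{2i}/\mu_{2i+2}\le e^{2\epsilon}$ with no hypothesis at all, while if $a=i$ then either $p_i/p_{i+1}\le e^{2\epsilon}$ (so $p_{i+1}$ is the needed second entry) or $p_i/p_{i+1}>e^{2\epsilon}$, in which case all of $A_0,\dots,A_{i+1}$ and $B_0,\dots,B_i$ lie above $A_{i+1}$, whence $\mu_{2i+2}\ge A_{i+1}$ and $\mu_{2i}/\mu_{2i+2}\le p_i/p_{i+1}$. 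The remaining cases follow the same pattern: $\mu_{2i}=B_a$ forces $a\le i-1$ with only $a=i-1$ tight; for $j=2i+1$, $\mu_{2i+1}=A_a$ forces $a\ge i+1$ and the $e^{2\epsilon}$ bound is then unconditional, while $\mu_{2i+1}=B_a$ forces $a\le i$ with only $a=i$ tight (and there the position count itself forces $p_i/p_{i+1}>e^{2\epsilon}$, with $\mu_{2i+3}\ge B_{i+1}$ giving the bound). Ties are handled by your fixed-convention bookkeeping. So your route does close, but note that this completion is comparable in length to the paper's $r$-case analysis; it is a different organization of the same work, not a shortcut.
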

The sketch of the proof is as follows. For any index $i$ and any iteration
$t$, if the ratio of $\frac{p_{i}^{t}}{p_{i+1}^{t}}\leq e^{2\epsilon}$,
then we can show that $p_{i}^{t+1}\leq e^{\epsilon}\left(p_{i}^{t}+p_{i+1}^{t}\right)$
and $p_{i+1}^{t+1}\geq e^{-\epsilon}\left(p_{i}^{t}+p_{i+1}^{t}\right)$
and as a result $\frac{p_{i}^{t+1}}{p_{i+1}^{t+1}}\leq e^{2\epsilon}$.
On the other hand, if the ratio of $\frac{p_{i}^{t}}{p_{i+1}^{t}}\geq e^{2\epsilon}$,
then it is easy to see that $\frac{p_{i}^{t+1}}{p_{i+1}^{t+1}}\leq\frac{p_{i}^{t}}{p_{i+1}^{t}}$.
Note that the sort operation in this case could only decrease $p_i$ or
increase $p_{i+1}$, both of which leads to a lower $\frac{p_{i}^{t+1}}{p_{i+1}^{t+1}}$. Therefore we can always bound
$\frac{p_{i}^{t+1}}{p_{i+1}^{t+1}}\leq\max\left\{ e^{2\epsilon},\frac{p_{i}^{t}}{p_{i+1}^{t}}\right\} $.
Induction on $t$ completes the proof of the theorem. A more detailed proof
is given in the supplementary material.


If the initial state satisfies $d^0_i \leq 2\epsilon$ for all $i$, 
which holds, for  a broad class of states like the maximally mixed state or the thermal
state when the computation qubits have a smaller gap than the reset qubit, 
then one obtains the following condition
for the asymptotic state:
\begin{equation}
p_{i}^{\infty}e^{-\epsilon}=p_{i+1}^{\infty}e^{\epsilon}\,,\forall i,\label{eq:cond}
\end{equation}
where $p_i$ are the diagonal elements of the density matrix of computation qubits.
Note that in general, it could be that $d^0_i \geq 2\epsilon$. We investigate the
more general case in the supplemental materials.

This condition together with the normalization of the state is enough
to determine the full state. Equation (\ref{eq:cond}) can be rewritten
as $p_{i}^{\infty}=e^{-2i\epsilon}p_{0}^{\infty}$ and considering state
normalization gives:
\begin{equation}
p_{0}^{\infty}=\frac{e^{-2\epsilon}-1}{\left(e^{-2\epsilon}\right)^{2^{n}}-1}.\label{eq:P0}
\end{equation}

Schulman et al. upper bounded $\lambda_{1}^{\infty}$ by $\frac{e^{2^{n}\epsilon}}{2^{n}}$
in \cite{schulman_physical_2005} which is consistent with our result.
Note that $\lambda_1^{\infty} = \frac{e^{\epsilon}}{e^{-\epsilon}+e^{-\epsilon}}p_{0}^{\infty}$.
Figure \ref{fig:Comparison}
gives a comparison between this bound
and the actual value from equation (\ref{eq:P0}). Plots are for $n=2$
and one reset qubit.
Figure \ref{fig:Comparison}
 illustrates how the upper bound in \cite{schulman_physical_2005}
 gets looser as $\epsilon$ increases.

\begin{figure}
\begin{centering}
\includegraphics[width=0.95\columnwidth]{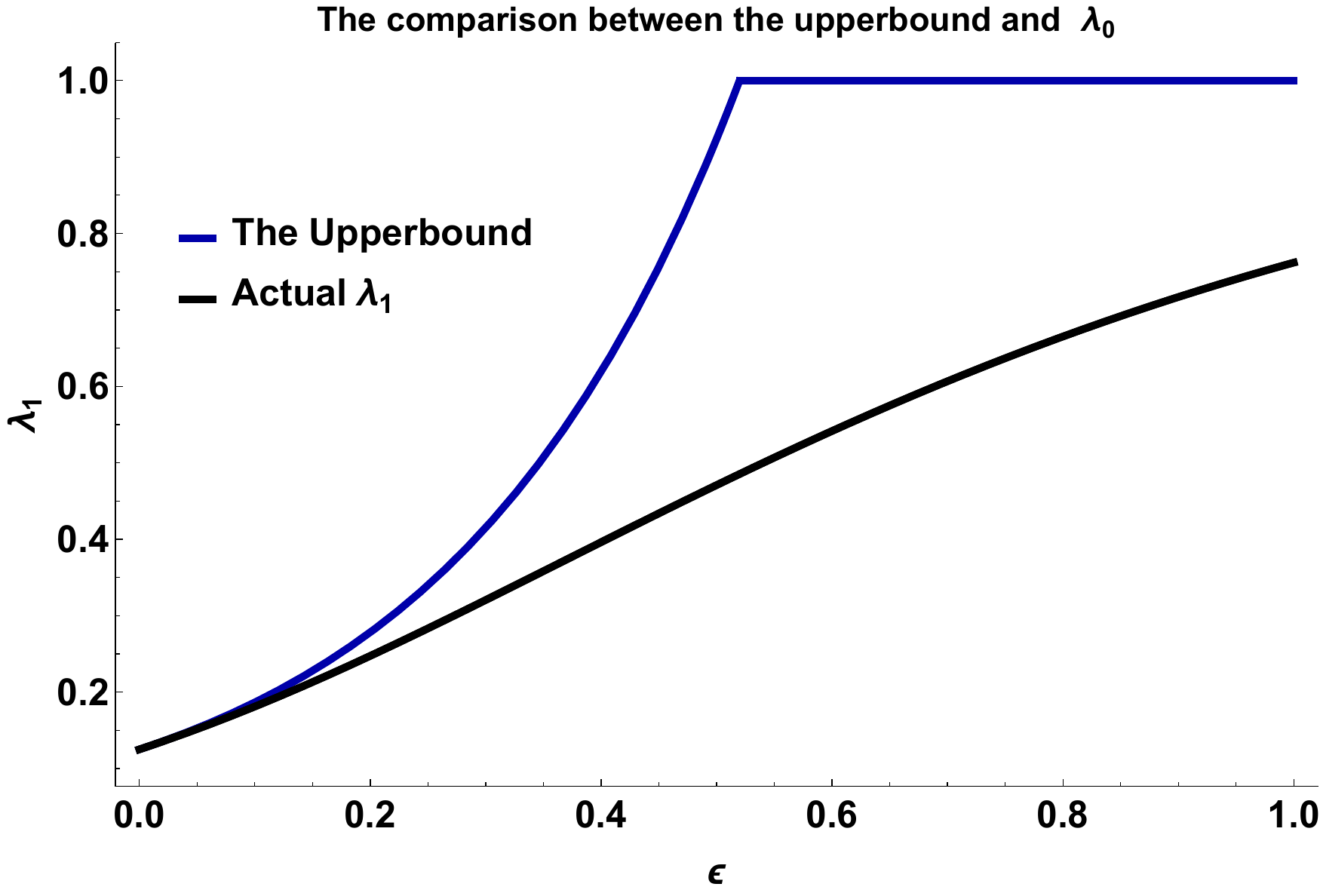}
\par\end{centering}
\caption{\label{fig:Comparison}Comparison of the upper bound and value of $\lambda_{1}^{\infty}$. 
The gap between the upper bound
and the actual value gets larger as $\epsilon$, the polarization
of the reset qubit increases. \footnote{Thanks to N. Rodrigues Briones and R. Laflamme for catching a glitch in an earlier version of this plot.}}
\end{figure}

Equations (\ref{eq:cond}) and (\ref{eq:P0}) give the asymptotic
state
\begin{equation}\label{eq:steadystate}
\left[\rho^{\infty}\right]=p_{0}^{\infty}\left\{ 1,\,
e^{-2\epsilon},\, e^{-4i\epsilon},\cdots\right\} \otimes\rho_{R}.
\end{equation} 

The first qubit has the lowest temperature. Therefore, we focus on
the first computation qubit for finding the cooling limit.
We find that the polarization of the first qubit is
\begin{equation}
P=2^{n-1}\epsilon.\label{eq:Polarization}
\end{equation}

This result is consistent with the lower bound that was calculated
 in \cite{elias2007optimal}, in the case of  $\epsilon\ll\frac{1}{2^{n}}$.
In fact we proved that this lower-bound is tight.

Equation (\ref{eq:Polarization}) shows that the performance of HBAC
increases exponentially with the number of qubits, $n$. The simple
way to see this is to look at the effective temperature. The effective
temperature of the first qubit  is
\begin{equation}
T_{\mbox{eff}}=\frac{\delta}{\Delta}\frac{T_{B}}{2^{n-1}},\label{eq:Temperature}
\end{equation}
where $\delta$ is the energy gap of the qubit and is often
different from $\Delta$, the energy gap of the reset qubit. Usually
the reset and computation qubits should be of different species as
the reset qubits have a shorter relaxation time. The cooling limit
would improve if the energy gap
of the reset qubit is much larger than the one for the computation
qubits. For instance, if an electron is used as the reset qubit and
hydrogen nuclear spins for computation, this ratio would be $\frac{1}{660}$ which
lowers the cooling limit by a factor of $660$.

Figure \ref{fig:Tvsn}
shows how the effective temperature
decreases with increasing the number of computation qubits, $n$. It
also shows that changing the  $\frac{\delta}{\Delta}$ changes the cooling limit.

\begin{figure}
\begin{centering}
\includegraphics[width=0.95\columnwidth]{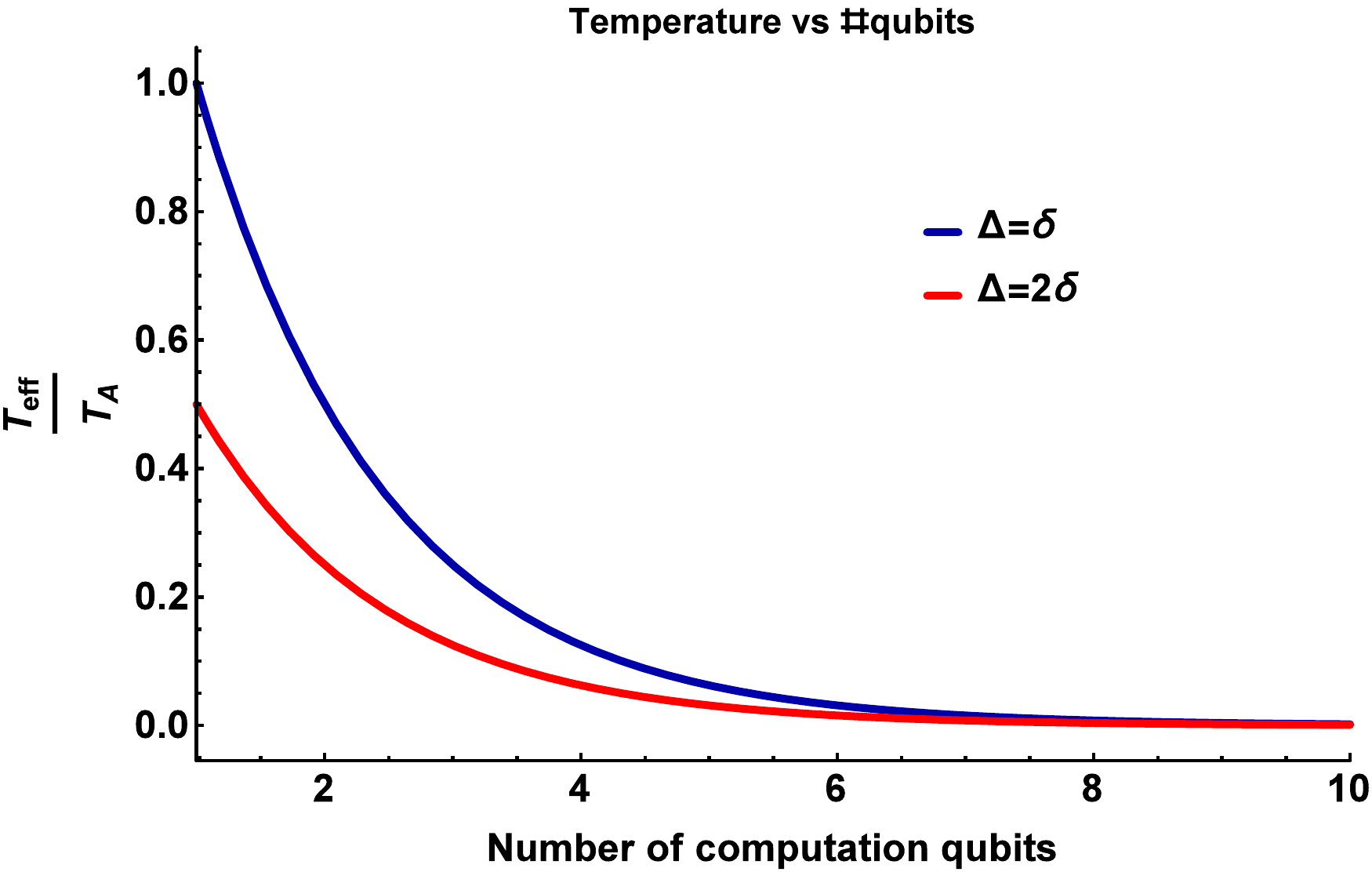}
\par\end{centering}

\caption{\label{fig:Tvsn} Asymptotic cooling ratio.
The cooling limit depends on the number of qubits and the ratio of
the energy gaps of the computation qubit to the one for the reset
qubit, $\frac{\delta}{\Delta}$. The cooling limit improves exponentially
with increasing n and linearly with decreasing $\frac{\delta}{\Delta}$. }
\end{figure}

We can also answer one of the important questions about HBAC, namely,
identifying how the performance of HBAC depends on the energy spectrum
of the reset system in the case of higher dimensional reset systems. %

Equation (\ref{eq:cond}) can be generalized for arbitrary reset state,
$\rho_{R}$. For a $D$-level reset system
we get a similar condition as in equation (\ref{eq:cond}) with the difference that the gap is replaced by the sum of the gaps.
%
%
We refer to this as the ``large gap'' and
use $\Delta_{total}$ to show it. The cooling limit for qudits is 
\begin{equation}
T_{\mbox{eff}}=\frac{\delta}{\Delta_{total}}\frac{T_{B}}{2^{n-1}},\label{eq:Temperature-Qudit}
\end{equation}
 which similar to the one in 
Equation (\ref{eq:Temperature}) with $\Delta$ replaced by $\Delta_{total}$. 

It is interesting that despite the more
complicated energy structure of the reset qubit, the ``large gap''
is the only parameter that would influence the cooling limit.
In particular, the cooling limit does not directly depend on
the number of energy levels or the spacing between them, as long
as the total gap does not change.


This result also implies that a multi-qubit reset is linearly
better than a single qubit reset. The energy gap of a multi-qubit
reset system is the sum of the energy gaps of the individual qubits and
as a result it has a larger gap which would improve the cooling limit.
For instance, if $k$ identical qubits are used for the reset, then
the energy gap would be $\Delta_{total}=k\,\Delta$ and it lowers
the cooling limit by a factor of $\frac{1}{k}$. %

Note that the energy structure of the reset system could still change
the complexity or the number of operations for HBAC
 but the asymptotic state only depends on the largest
gap of the reset system.

In conclusion, we establish the fundamental limit of cooling for all HBAC techniques
and show that it reduces exponentially with the number of qubits.
It also depends on the ratio of the energy gap of the reset qubit to
the gap of the computation qubits. We studied the effects of the changes
to the energy spectrum of the reset system and showed that only the
large gap of the reset system affects the asymptotic state. In particular,
the number of energy levels, for a constant energy gap, does not influence
the cooling limit.

Note that experimental imperfections could affect the minimum
achievable cooling.
The HBAC operations are optimized based on the probability distribution at each step, and thus unknown errors in the probability distribution mean the cooling steps will not be optimal.
It is therefore critical to investigate
these imperfections. Some of these, like decoherence
has been recently studied \cite{brassard_prospects_2014}.

Besides the fundamental  significance, the
cooling limit could have practical applications as well. For instance,
it could give a quantitative measure of imperfection for implementing
and studying the HBAC.
One natural choice would be the distance from the
asymptotic state, $\rho^{\infty}$, which requires the full density
matrix. This may be expensive experimentally. An easier solution
is $\left|p_{0}-p_{0}^{\infty}\right|$  which approaches zero as the
state approaches the asymptotic cooling limit.  Or, simply $\left|P_0 - 2^{n-1}\epsilon \right|$, where $P_0$ is the achieved polarization of the first computation qubit.

For experimental implementation of HBAC, this measure
quantifies how far the experiment is from the cooling limit and gives
a metric for the assessment of the progress in the experiment.

Similarly, it can be used for theoretical cost analysis
of HBAC which requires a notion of distance from the asymptotic state.
The number of operations that are needed to achieve a certain fidelity
to the asymptotic state can be calculated in terms of such a distance.
Some studies have investigated this problem in terms of the number of iterations required \cite{schulman2007physical, schulman_physical_2005, NR2014}.
 By calculating the cost of an iteration in terms of resources such as gates or time, one could build on these works in order to calculate the resource cost of implementing HBAC.


This work was supported by Canada's NSERC, MPrime, CIFAR, and CFI.
IQC and Perimeter Institute are supported in part by the Government of Canada and the Province of Ontario.


\bibliography{HBAC}

\setcounter{equation}{0}
\setcounter{figure}{0}
\setcounter{table}{0}
\setcounter{page}{1}
\makeatletter
\renewcommand{\theequation}{S\arabic{equation}}
\renewcommand{\thefigure}{S\arabic{figure}}
\renewcommand{\bibnumfmt}[1]{[S#1]}
\renewcommand{\citenumfont}[1]{S#1}


\section{Proof of the asymptotic state of PPA AC}

We first prove the Theorem \ref{Thm:Qubitmaxdist} 
in the paper and its
extension for qudit reset and then prove the convergence of
the PPA algorithmic cooling.

Before we get to the proof, it is useful to explain a few details
about the dynamics and the update rules of the PPA.

We start by explaining ``crossings'' which are the building block
of the cooling in PPA.

\subsection*{Crossings}

When the reset qubit is reset, the ordering of the elements on the
diagonal of the density matrix changes. These changes are what lead
to the cooling. The reset step takes the state $\left[\rho\right]
=\left\{ \lambda_{1},\lambda_{2},\cdots\lambda_{2^{n+1}}\right\}$ of $n$ computation
qubits and one reset qubit to
\[
\left[\rho^{'}\right]=\left\{ p_{0},p_{1},\cdots p_{2^{n}-1}\right\} \otimes\left[\rho_{R}\right],
\]

where $p_{i}=\lambda_{2i+1}+\lambda_{2i+2}$ and $\left[\rho_{R}\right]=\left\{e^\epsilon, e^{-\epsilon} \right\}$.
This can be generalized for the reset with a qudit $\left[\rho_{R}\right]=\left\{a_1, a_2,
 ..., a_k\right\} $ as well. Although the probabilities $p_{i}$
are sorted, the full density matrix is not necessarily sorted. For instance, for some indices $i < j$, and $m_i>m_j$, we could have
\begin{equation}
p_{i}a_{m_i} < p_{j}a_{m_j}, \label{eq:Crossingdown_cond}
\end{equation}
or similarly for some indices $i > j$, and $m_i < m_j$, we could get
\begin{equation}
p_{i}a_{m_i} > p_{j}a_{m_j}. \label{eq:Crossingtop_cond}
\end{equation}
In these cases, the sort operation in PPA would rearrange these 
 terms
and update the value of $p_{i}$. 
We refer to the conditions in equation (\ref{eq:Crossingdown_cond}) and
equation (\ref{eq:Crossingtop_cond}) as ``crossing from below''
and ``crossing from above'' respectively.

Despite the complexity of different crossings, we can make the following
general remarks.

\begin{rem}\label{Crossing-from-above}
\textbf{Crossing from above} If $p_i$ combined with the reset qudit probabilities gives the values
$\left\lbrace p_i a_1, p_i a_2, \cdots p_i a_k \right\rbrace $
and there is crossing from above (and none from below), the sort
operation on all $k2^n$ probabilities yields the values $ \lambda_1 \geq \lambda_2 \geq \cdots  \geq \lambda_k = p_i a_k$ for the $(ik+1)$th, $(ik+2)$th, \ldots $(ik+k)$th probabilities (i.e. the probabilities that will add up to determine $p_i^{\prime}$) such that
\begin{equation}
p_i a_j \geq \lambda_j \geq p_i a_k.
\end{equation}

\end{rem}

\begin{proof}
The second inequality is easier to see. It comes from the fact that
$\forall j<i, \, p_{j} \geq p_i$ and therefore, for any $m$, we have $p_{j} a_m \geq p_i a_k$.

The first inequality comes from the ordering. 
The value $\lambda_j$
is by definition the first element that is $ \geq \lambda_{j+1}$. One can use induction (starting with $k=j+1$ as the base case) to prove the first inequality. Specifically,
if $\lambda_{j+1} \leq p_i a_{j+1}$ then (since $a_j \geq a_{j+1}$) we have $p_i a_{j} \geq \lambda_{j+1}$ implies
which implies that $\lambda_j \leq p_i a_j$.

\end{proof}

\begin{rem}
\textbf{\label{Crossing-from-below} Crossing from below} If $p_{i+1}$  combined with the reset qudit probabilities gives the values
$\left\lbrace p_{i+1} a_1, p_{i+1} a_2, \cdots p_{i+1} a_k \right\rbrace $
and there is crossing from below (and none from above), the sort
operation on all $k2^n$ probabilities yields the values
$\delta_1 = p_{i+1}a_1 \geq \delta_2 \geq \cdots  \geq \delta_k $
 for the $((i+1)k+1)$th, $((i+1)k+2)$th, \ldots $((i+1)k+k)$th probabilities such that
\begin{equation}
p_{i+1} a_j \leq \delta_j \leq p_{i+1} a_1.
\end{equation}

\end{rem}
The proof is similar to the one for crossing from above.


For simplicity we define a distance.
Consider two consecutive elements of the
density matrix, $p_{i}$ and $p_{i+1}$. We define the following distance
between the elements of the density matrix
\begin{equation}\label{Distance}
d_{i} \overset{\text{Def}}{=} \log\frac{p_{i}}{p_{i+1}}.
\end{equation}

We use the distance and generalise the Theorem
\ref{Thm:Qubitmaxdist} 
in the following way.

\begin{thm}\label{Thm:maxdist}
For PPA algorithmic cooling with a reset qudit $\left[\rho_{R}\right]=\left\{a_1, a_2, ..., a_k\right\} $ where $a_l$ are sorted decreasing and sum to $1$ and for any iteration $t$ , $d^t_i\leq \max \left\{ d^0_i, \log\frac{a_1}{a_k}\right\} $.
\end{thm}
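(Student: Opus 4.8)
The plan is to prove the bound by induction on $t$, having reduced it to a single-iteration estimate. Write $L=\log\frac{a_1}{a_k}$, so that the reset-qubit Theorem~\ref{Thm:Qubitmaxdist} is the special case $k=2$. I would first isolate two single-step facts, mirroring the two cases in the sketch of Theorem~\ref{Thm:Qubitmaxdist} with $e^{2\epsilon}$ replaced by $a_1/a_k$: (A) if $d_i^t\le L$ then $d_i^{t+1}\le L$; and (B) if $d_i^t\ge L$ then $d_i^{t+1}\le d_i^t$. Granting these, the induction is immediate: assuming $d_i^t\le\max\{d_i^0,\,L\}$, either $d_i^t\le L$ and (A) gives $d_i^{t+1}\le L\le\max\{d_i^0,\,L\}$, or $d_i^t>L$, in which case the hypothesis forces $\max\{d_i^0,\,L\}=d_i^0$ and (B) gives $d_i^{t+1}\le d_i^t\le d_i^0$. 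The base case $t=0$ is trivial.

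The single iteration I would analyze through the block structure of the reset-then-sort step: after forming all products $\{p_j^t a_m\}$ and sorting, $p_i^{t+1}$ is the sum over sorted positions $ik+1,\dots,ik+k$ and $p_{i+1}^{t+1}$ the sum over the next $k$ positions. For (A), the hypothesis $d_i^t\le L$ is equivalent to $p_i^t a_k\le p_{i+1}^t a_1$. Here I would invoke the crossing remarks: Remark~\ref{Crossing-from-above} bounds the contents of the block producing $p_i^{t+1}$ from above in terms of $a_1$ times the relevant input masses, while Remark~\ref{Crossing-from-below} bounds the contents of the block producing $p_{i+1}^{t+1}$ from below in terms of $a_k$ times the corresponding masses. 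Summing the per-level bounds and dividing, the common masses should cancel and leave $\frac{p_i^{t+1}}{p_{i+1}^{t+1}}\le\frac{a_1}{a_k}$, i.e. $d_i^{t+1}\le L$. In the crossing-free $k=2$ picture this is already tight, since block $i$ collects the heavy $a_1$-slices and block $i+1$ the light $a_k$-slices of the same pair of masses.

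For (B), the hypothesis $d_i^t\ge L$ reads $p_i^t a_k\ge p_{i+1}^t a_1$, so every product of $p_i^t$ dominates every product of $p_{i+1}^t$ and these two blocks never cross each other. Any rearrangement then comes only from crossings with other computation indices, which (exactly as asserted for the qubit case) can only lower $p_i^{t+1}$ or raise $p_{i+1}^{t+1}$; hence $d_i^{t+1}\le d_i^t$.

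The step I expect to be the main obstacle is (A) in full generality for $k>2$. Unlike the qubit case, a single output block of $k$ entries can be fed by several distinct input indices through crossings from above and from below at once, so the upper bound on $p_i^{t+1}$ and the lower bound on $p_{i+1}^{t+1}$ need not be expressed over literally the same set of input masses. The work is to track this index correspondence carefully enough that the ratio still collapses to $a_1/a_k$; Remarks~\ref{Crossing-from-above} and~\ref{Crossing-from-below} are precisely the tools that keep each block's entries sandwiched between its $a_1$- and $a_k$-multiples, and the remaining bookkeeping is to align them across the two consecutive blocks.
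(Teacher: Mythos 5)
Your high-level architecture coincides with the paper's: the supplementary proof also reduces everything to a single-iteration estimate, and the dichotomy it uses --- at least one crossing between $p_i$ and $p_{i+1}$ (equivalently $p_i a_k < p_{i+1}a_1$, your case (A)) versus no crossing (your case (B)) --- is exactly your split, wrapped in the same induction on $t$. Your treatment of (B) is correct and is precisely the paper's $r=0$ case: when this boundary has no crossing, Remark \ref{Crossing-from-above} gives $p_i' \le \sum_j p_i a_j = p_i$ and Remark \ref{Crossing-from-below} gives $p_{i+1}' \ge \sum_j p_{i+1}a_j = p_{i+1}$, hence $d_i' \le d_i$.

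The genuine gap is case (A), which you explicitly leave open, and the mechanism you anticipate (``summing the per-level bounds \ldots the common masses should cancel'') is not what actually happens. The paper closes (A) as follows. To address your (legitimate) worry that a block can be fed by several input indices at once, it defines $\lambda_1 \ge \cdots \ge \lambda_k$ as the bottom $k$ order statistics of the \emph{entire} aggregated upper group $\{p_j a_l : j \le i\}$ and $\delta_1 \ge \cdots \ge \delta_k$ as the top $k$ of the lower group $\{p_j a_l : j \ge i+1\}$; the sandwich bounds $p_i a_k \le \lambda_j \le p_i a_j$ and $p_{i+1} a_j \le \delta_j \le p_{i+1} a_1$ hold no matter which computational indices those entries came from, and in the generic configuration ($\lambda_1 \ge \delta_1$ and $\lambda_k \ge \delta_k$) the sorted blocks $i$ and $i+1$ consist exactly of $\{\lambda_1,\dots,\lambda_{k-r}\}\cup\{\delta_1,\dots,\delta_r\}$ and $\{\lambda_{k-r+1},\dots,\lambda_k\}\cup\{\delta_{r+1},\dots,\delta_k\}$ for a single integer $r$. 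Even with this alignment the masses do \emph{not} cancel: for $1 \le r \le k/2$ one gets $p_i' \le r a_1(p_i+p_{i+1}) + p_i\chi$ and $p_{i+1}' \ge r a_k(p_i+p_{i+1}) + p_{i+1}\chi$ with $\chi = \sum_{j=r+1}^{k-r} a_j$, and the uncancelled terms $p_i\chi$, $p_{i+1}\chi$ are controlled only because the crossing hypothesis $p_i a_k \le p_{i+1} a_1$ bounds $p_i/p_{i+1} \le a_1/a_k$, after which a mediant-type inequality yields $p_i'/p_{i+1}' \le a_1/a_k$. Separate estimates are still required for $k/2 < r < k$ (there the bounds take a different form and one uses $p_{i+1} \le p_i$ instead) and for the degenerate configurations $\lambda_1 < \delta_1$ or $\lambda_k < \delta_k$ (where cruder bounds such as $p_i' \le k\, p_{i+1} a_1$ and $p_{i+1}' \ge k\, p_{i+1} a_k$ apply). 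This case analysis is where the content of the theorem for $k>2$ lives, and your proposal stops just short of it.
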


\begin{proof}
We focus on $d_i$ for some arbitrary iteration and prove the bound.

After the reset step of the iteration, we break down the sort operation
into two steps. First we separately sort the values $p_{j}a_l$ for $j\geq i+1$,
and also the values $p_ja_l$ for $j\leq i$. This means that the sort does not involve
sorting between terms of the form $p_{i} a_l$ and $p_{j\geq i+1} a_l$ nor between terms of the form $p_{i+1}a_l$
and $p_{j\leq i} a_l$.

Let $\lambda_{j}$ be the $(k-j+1)$th smallest number after the sort
operation among the $p_{j}a_l$ values for $j\leq i$ and let
$\delta_{j}$ be the
$j$th largest value after the sort operation among the $p_{j}a_l$ values for
$j\geq i+1$. Note that $\delta_1 = p_{i+1} a_1$ and $ \lambda_k = p_i a_k$.
In the next step we complete the sort by combining and sorting the probabilities that lie between $p_{i+1} a_1$ and $p_{i} a_k$.

Let's first consider the case that $\lambda_1 \geq \delta_1$ and
$\lambda_k \geq \delta_k$. 
Then when we merge and resort the $\lambda_j$ and $\delta_j$ values, for some integer $r \geq 0$ the values $\delta_1, \delta_2, \ldots, \delta_r$ will appear among the largest $k$ values, and $\delta_{r+1}, \ldots, \delta_{k}$ will appear among the smallest $k$ values. Similarly, the values $\lambda_1, \ldots, \lambda_{k-r}$ will appear among the largest $k$ values, and the values $\lambda_{k-r+1}, \ldots, \lambda_k$ will appear among the smallest $k$ values.

The sub-case where
$r=0$ corresponds to when there is no crossing between
$p_i$ and $p_{i+1}$  (i.e. $p_i a_k \geq p_{i+1} a_1$) and we will come back to this case as well.

Let us next consider the sub-case that $1 \leq r \leq k/2$.
Thus we get
\begin{align*}
& p'_{i}=\sum_{j=1}^{r}\left(\lambda_{j}+\delta_{j}\right)+\sum_{j=r+1}^{k-r}\lambda_{j}\\
 & p'_{i+1}=\sum_{j=k-r+1}^{k}\left(\lambda_{j}+\delta_{j}\right)+\sum_{j=r+1}^{k-r}\delta_{j}.
\end{align*}


Using Remark \ref{Crossing-from-above} and Remark \ref{Crossing-from-below} we find the following bounds
\begin{align*}
p'_{i}\leq & \sum_{j=1}^{r}\left(p_{i}a_{j}+p_{i+1}a_{1}\right)+\sum_{j=r+1}^{k-r}p_{i}a_{j}\leq ra_{1}\left(p_{i}+p_{i+1}\right)+p_{i}\chi\\
p'_{i+1}\geq & \sum_{j=k-r+1}^{k}\left(p_{i}a_{k}+p_{i+1}a_{j}\right)+\sum_{j=r+1}^{k-r}p_{i+1}a_{j}\geq ra_{k}\left(p_{i}+p_{i+1}\right)+p_{i+1}\chi,
\end{align*}

where $\chi=\sum_{j=r+1}^{k-r}a_{j}$. We want to show that $\frac{p'_{i}}{p'_{i+1}}\leq\frac{a_{1}}{a_{k}}$ which follows from the fact that $p_{i}a_{k}\leq p_{i+1}a_{1}$ when there is at least one crossing between $p_{i}$ and $p_{i+1}$. In other words,

\begin{align*} \frac{p'_{i}}{p'_{i+1}}\leq\frac{ra_{1}\left(p_{i}+p_{i+1}\right)+p_{i}\chi}{ra_{k}\left(p_{i}+p_{i+1}\right)+p_{i+1}\chi}\leq\frac{ra_{1}\left(p_{i}+p_{i+1}\right)}{ra_{k}\left(p_{i}+p_{i+1}\right)}\leq\frac{a_{1}}{a_{k}}.
\end{align*}

Let us next consider the sub-case that $k/2 < r < k$.
Thus we get
\begin{align*}
\Rightarrow & p'_{i}=\sum_{j=1}^{k-r}\left(\lambda_{j}+\delta_{j}\right)+\sum_{j=k-r+1}^{r}\delta_{j}\\
\Rightarrow & p'_{i+1}=\sum_{j=r+1}^{k}\left(\lambda_{j}+\delta_{j}\right)+\sum_{j=k-r+1}^{r}\lambda_{j}.
\end{align*}


Using Remark \ref{Crossing-from-above} and Remark \ref{Crossing-from-below} we find the following bounds
\begin{align*}
p'_{i}\leq & (k-r)(p_i + p_{i+1}) a_1 + (2r-k)p_{i+1}a_1 \\
p'_{i+1}\geq & (k-r)(p_i + p_{i+1})a_k + (2r-k)p_i a_k \\
\Rightarrow & p_{i}^\prime / p_{i+1}^\prime \leq 
\frac{\left((k-r)(p_i+p_{i+1}) + (2r-k) p_{i+1} \right)a_1}{\left((k-r)(p_i+p_{i+1}) + (2r-k) p_{i} \right)a_k} \\
& \leq 
\frac{\left((k-r)(p_i+p_{i+1}) + (2r-k) p_{i+1} \right)a_1}{\left((k-r)(p_i+p_{i+1}) + (2r-k) p_{i+1} \right)a_k} \leq \frac{a_1}{a_k}.
\\
\end{align*}

For the case where $r=0$, we get

\begin{align*}
p'_{i}=\sum_{j=1}^{k}\lambda_{j}\leq & \sum_{j=1}^{k}p_{i}a_{j}=p_{i}\\
p'_{i+1}=\sum_{j=1}^{k}\delta_{j}\geq & \sum_{j=1}^{k}p_{i+1}a_{j}=p_{i+1}.
\end{align*}

It follows that $\frac{p'_{i}}{p'_{i+1}}\leq\frac{p_{i}}{p{}_{i+1}}$.

Now we get to the case when either all
the $\lambda_j \leq \delta_1 = p_{i+1}a_1$ or when
all the $\delta_j \geq \lambda_k = p_i a_k$ (or both).
If it is the former, we get $p'_i \leq k \delta_1 = k p_{i+1} a_1$
and we also know that $p'_{i+1} \geq k p_{i+1} a_k$
which gives the desired result.
Similarly, if all the $\delta_j \geq \lambda_k$ then
$p'_{i+1} \geq k p_{i} a_k$ and $p'_i \leq k p_i a_1$
which again leads to $p'_i/p'_{i+1} \leq a_1/a_k$.




So the distance $d'_i$
is bounded above by $\max \left\lbrace  d_i,\frac{a_1}{a_k}\right\rbrace $ which proves the theorem.


\end{proof}

Note that for the case of $k=2$ and $a_1  = e^\epsilon$ and $a_2 = e^{-\epsilon}$
we get Theorem \ref{Thm:Qubitmaxdist}.


Now we use this to prove that all of the $p_i$ converge.

To prove the convergence, we first prove that $p_0$ converges and then the convergence
of all the $p_i$ follows from that. In order to make the connection between the
convergence of $p_0$ and other $p_i$, we use Theorem \ref{Thm:maxdist}.

\begin{thm}
Let $p_{0}^{t}$ be the first diagonal element of the reduced density matrix of the
computation qubits in the $t^{th}$ iteration of PPA algorithmic cooling. Then
$\underset{t\rightarrow\infty}{\lim}\, p_{0}^{t}=p^\infty_{0}$, for some constant $p^\infty_0$. \end{thm}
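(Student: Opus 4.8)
The plan is to show that the sequence $\{p_0^t\}_t$ is monotonically non-decreasing and bounded above by $1$, so that convergence to some limit $p_0^\infty$ follows at once from the monotone convergence theorem for real sequences. Boundedness is immediate, since each $p_0^t$ is a diagonal element of a density matrix and therefore lies in $[0,1]$. Thus the whole content of the argument is the single monotonicity inequality $p_0^{t+1}\geq p_0^t$, and I would spend the proof establishing exactly that.

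To prove monotonicity, I would first pin down what $p_0^{t+1}$ is in terms of the state at iteration $t$. After the reset step of iteration $t$ the full state is $\{p_0^t,p_1^t,\dots\}\otimes[\rho_R]$, i.e. the multiset of $2^{n+1}$ values $\{p_i^t a_1,\,p_i^t a_2\}_i$, where $a_1=\frac{e^{\epsilon}}{e^{\epsilon}+e^{-\epsilon}}$ and $a_2=\frac{e^{-\epsilon}}{e^{\epsilon}+e^{-\epsilon}}$, so that $a_1+a_2=1$. The following iteration first sorts these values into $\lambda_1^{t+1}\geq\lambda_2^{t+1}\geq\cdots$ and then applies the reset, which by the rule $p_i=\lambda_{2i+1}+\lambda_{2i+2}$ gives $p_0^{t+1}=\lambda_1^{t+1}+\lambda_2^{t+1}$. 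Hence $p_0^{t+1}$ is precisely the sum of the two largest elements of the multiset $\{p_i^t a_1,\,p_i^t a_2\}_i$.

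The key step is then elementary: the sum of the two largest elements of a multiset is at least the sum of any two of its entries taken from distinct positions. The multiset above contains the two entries $p_0^t a_1$ and $p_0^t a_2$, whose sum is $p_0^t(a_1+a_2)=p_0^t$. Therefore $p_0^{t+1}\geq p_0^t a_1+p_0^t a_2=p_0^t$, which is the desired monotonicity, and combined with $p_0^t\leq 1$ this yields the existence of $p_0^\infty=\lim_{t\to\infty}p_0^t$.

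I expect the only real obstacle to be the bookkeeping rather than any analytic subtlety: one must be careful about the order of the sort and reset steps within an iteration, and about the convention that the reset qubit is the least significant qubit, so that the partial trace pairs adjacent diagonal entries and $p_0=\lambda_1+\lambda_2$. Once $p_0^{t+1}$ is correctly identified as this top-pair sum, the inequality is forced and does not even require the $p_i^t$ to be sorted. Note that Theorem \ref{Thm:maxdist} plays no role in this statement; it is needed only afterwards, to promote the convergence of $p_0$ to the convergence of all the $p_i$.
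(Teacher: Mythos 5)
Your proposal is correct and follows essentially the same route as the paper: the paper's proof is exactly ``$p_0^t$ is increasing because there can only be crossings from below for $p_0$, and it is bounded above, hence it converges.'' Your contribution is simply to make the monotonicity step explicit --- identifying $p_0^{t+1}$ as the sum of the two largest entries of the multiset $\left\{ p_i^t a_1,\, p_i^t a_2\right\}_i$, which is at least $p_0^t a_1 + p_0^t a_2 = p_0^t$ --- which is a clean, rigorous rendering of what the paper asserts in its crossing terminology.
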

\begin{proof}
The sequence of values
$p_{0}^{t}$ are increasing because there can only be crossings from below for $p_0$. The sequence is also is upper-bounded, therefore it must converge:
$p^\infty_{0}=\lim_{t\rightarrow \infty} p_{0}^{t}$.
\end{proof}

\begin{thm}
Let $p_{i}^{t}$ be the $i^{th}$ diagonal element of the reduced density matrix of the
computation qubits in the $t^{th}$ iteration of PPA algorithmic cooling with a qubit reset $\left[\rho_{R}\right]=\frac{1}{e^{-\epsilon}+e^{\epsilon}}\left\{ e^{\epsilon},e^{-\epsilon}\right\} $. Then assuming
that $d_i^0\leq 2\epsilon, \forall i$, the limit
$\underset{t\rightarrow\infty}{\lim}\, p_{i}^{t} = e^{-2 i \epsilon}p^\infty_0 $ exists. We refer to the limit as
$p_{i}^{\infty}$.
\label{Thm:convergence}
\end{thm}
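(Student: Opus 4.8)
The plan is to split the statement into two independent parts: first, that each $p_i^t$ converges to \emph{some} limit $p_i^\infty$, and second, that the consecutive limits are in the exact ratio $e^{-2\epsilon}$. Throughout I would use the standing hypothesis $d_i^0\le 2\epsilon$ together with Theorem~\ref{Thm:maxdist}, which then guarantees $d_i^t\le 2\epsilon$, equivalently $p_{i+1}^t\ge e^{-2\epsilon}p_i^t$, for every index $i$ and every iteration $t$. Iterating the latter gives $p_i^t\ge e^{-2i\epsilon}p_0^t$, which already yields one half of the desired equality in the limit.

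For convergence I would track the cumulative sums $C_k^t=\sum_{i=0}^{k-1}p_i^t$ rather than the individual $p_i^t$, which need not be monotone. After the reset step the diagonal consists of the $2^{n+1}$ numbers $p_i^t e^{\pm\epsilon}/(e^\epsilon+e^{-\epsilon})$, and since the new state is sorted, $C_k^{t+1}$ is precisely the sum of the $2k$ largest of them. The two numbers coming from each of the top $k$ blocks already sum to $\sum_{i=0}^{k-1}\big(p_i^t e^\epsilon + p_i^t e^{-\epsilon}\big)/(e^\epsilon+e^{-\epsilon})=C_k^t$, so the sum of the $2k$ largest can only be larger: $C_k^{t+1}\ge C_k^t$. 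Each $C_k^t$ is therefore non-decreasing and bounded above by $1$, hence convergent, and consequently $p_i^t=C_{i+1}^t-C_i^t$ converges to some $p_i^\infty$. In particular $p_0^\infty>0$, since $p_0^t$ is non-decreasing from a positive starting value, whence $p_i^\infty\ge e^{-2i\epsilon}p_0^\infty>0$ for all $i$.

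To pin down the values I would argue that $\rho^\infty=\lim_t\rho^t$ is a genuine fixed point of one PPA iteration. The iteration map --- form the products $p_i e^{\pm\epsilon}/(e^\epsilon+e^{-\epsilon})$, sort, and regroup into blocks --- is continuous, because the order statistics of a fixed-length vector depend continuously on its entries; passing to the limit in $\rho^{t+1}=F(\rho^t)$ then gives $F(\rho^\infty)=\rho^\infty$. A fixed point must already be sorted after its reset step, which is exactly the steady-state condition $p_i^\infty e^{-\epsilon}\ge p_{i+1}^\infty e^\epsilon$, i.e. $d_i^\infty\ge 2\epsilon$. Combined with $d_i^t\le 2\epsilon$ for all $t$, which forces $d_i^\infty\le 2\epsilon$, this squeezes $d_i^\infty=2\epsilon$, so $p_{i+1}^\infty=e^{-2\epsilon}p_i^\infty$ and hence $p_i^\infty=e^{-2i\epsilon}p_0^\infty$; normalization then fixes $p_0^\infty$ to the value in~(\ref{eq:P0}).

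As a concrete check of the value I would note that the bound $d_i^t\le 2\epsilon$ makes the first-block update fully explicit: the two largest post-reset numbers are $p_0^t e^\epsilon$ and $p_1^t e^\epsilon$ (each divided by $e^\epsilon+e^{-\epsilon}$, since $d_0^t\le 2\epsilon$ means $p_1^t e^\epsilon\ge p_0^t e^{-\epsilon}$), so $p_0^{t+1}=\frac{e^\epsilon}{e^\epsilon+e^{-\epsilon}}(p_0^t+p_1^t)$; letting $t\to\infty$ recovers $p_1^\infty=e^{-2\epsilon}p_0^\infty$ directly, in agreement with the fixed-point computation. I expect the main obstacle to be the value step rather than the existence of the limit: one must carefully justify that $\rho^\infty$ really is a fixed point (continuity of the sort, so that the steady-state condition of the main text applies) and that no $p_i^\infty$ vanishes, so that the ratios $d_i^\infty$ are well defined and the two-sided bound legitimately closes.
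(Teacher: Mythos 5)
Your proposal is correct, but it takes a genuinely different route from the paper's own proof, and in some respects a cleaner one. The paper only proves monotone convergence for $p_0^t$ (it can only gain probability, from crossings from below), and then bootstraps: since $d_0^t\le 2\epsilon$ forces the explicit update $p_0^{t+1}=\frac{e^{\epsilon}}{z}\left(p_0^t+p_1^t\right)$, the vanishing of $p_0^{t+1}-p_0^t$ pins down $\lim_t p_1^t=e^{-2\epsilon}p_0^\infty$, and convergence of the remaining $p_i^t$ is obtained by induction on $i$ through the update equation $p_i^{t+1}-p_i^t=\zeta+p_{i+1}^t\frac{e^{\epsilon}}{z}-p_i^t\frac{e^{-\epsilon}}{z}$, where $\zeta$ is an error term from crossings from above that is argued to vanish asymptotically. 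You instead establish convergence of \emph{all} coordinates at once via monotonicity of the cumulative sums $C_k^t$ (the sum of the $2k$ largest post-reset values dominates the sum of the $2k$ values from the top $k$ blocks, which equals $C_k^t$ — a majorization-style argument the paper never makes), and then identify the limit by a continuity/fixed-point argument squeezed against Theorem~\ref{Thm:maxdist}. This avoids the paper's somewhat informal handling of the $\zeta\rightarrow 0$ bookkeeping and the non-monotonicity of individual $p_i^t$, while the paper's route has the virtue of exhibiting the explicit asymptotic dynamics (your own ``concrete check'' for $i=0$ is exactly the paper's argument). The one step you should not leave to citation is the claim that a fixed point of the PPA map must be sorted after its reset step: the main text asserts this but does not prove it, and in your framework it follows from your own device — if $p_{i+1}e^{\epsilon}>p_i e^{-\epsilon}$ for some $i$, then exchanging $p_i e^{-\epsilon}$ for $p_{i+1}e^{\epsilon}$ among the top $2(i+1)$ values shows $C_{i+1}(F(p))>C_{i+1}(p)$, so $p$ is not fixed. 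With that supplied, and with your (correct) observation that the hypothesis $d_i^0\le 2\epsilon$ forces all $p_i^0>0$ so that the distances $d_i^\infty$ are well defined, the argument is complete.
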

\begin{proof}
We already proved that the $p_0^{\infty}$ exists. This means that
\begin{equation}
\underset{t\rightarrow\infty}{\lim}\, (p_0^{t+1}-p_0^{t}) = 0.
\end{equation}
On the other hand, Theorem \ref{Thm:maxdist} implies that as $t \rightarrow \infty$
\begin{equation}
p_0^{t+1}-p_0^{t} = p_1^{t} \frac{e^\epsilon }{z} - p_0^{t}\frac{e^{-\epsilon} }{z}.
\label{deltaP}\end{equation}

The limit of the last term, $p_0^{t}\frac{e^{-\epsilon} }{z}$ exists and the left hand side converges to zero, so $\underset{j\rightarrow\infty}{\lim}\, p_1^{t} $ must be $e^{-2 \epsilon}p^\infty_0$.
The convergence of the rest of the $p_{i}^{t}$ follows by induction.
Note that although there could be crossings from above for $i\geq 1$,
the change from above approaches zero and we get
\begin{equation}
p_i^{t+1}-p_i^{t} =\zeta + p_{i+1}^{t} \frac{e^\epsilon }{z} - p_i^{t}\frac{e^{-\epsilon} }{z},
\end{equation}
where $\zeta$ accounts for the changes to $p_i$ from crossings from above
and $\lim_{t\rightarrow\infty}\zeta=0$. Therefore we get
\begin{equation}
p_{i+1}^{\infty} = e^{-2\epsilon}p_i^{\infty}.
\end{equation}
\end{proof}

For the PPA with a reset qudit $\left[\rho_{R}\right]=\left\{a_1, a_2, ..., a_k\right\} $,
 the proof is similar. As $j \rightarrow \infty$ we get $(p_0^{t+1}-p_0^{t}) - (p_1^{t}
 a_1 - p_0^{t}a_k) \rightarrow 0$  and since as $t \rightarrow \infty$ the left hand
 side and the last term converge, so does $p_1^{t}$. The rest of the proof follows similarly.

 If we start with a maximally mixed state for the computation qubits, $d^0_i=0$, i.e. initially all the distances are zero, then for any iteration $t$ and any index $i$, we get
\begin{equation}
d^t_i \leq  \log\left(\frac{a_{1}}{a_{k}}\right)
\end{equation}
and thus Theorem \ref{Thm:convergence} (and its generalization to qudits) applies.

Note that for the case where the qubits are not initially
in the maximally mixed state, Theorem \ref{Thm:maxdist} still applies and can be used to find the asymptotic state.
A sufficient condition for getting the asymptotic state in
Equation (\ref{eq:steadystate})
is that $\log_2(\frac{p^0_0}{p^0_{2^n-1}})\leq 2(2^n-1)\epsilon$.

For the more general case of
$\left[\rho^{0}\right]=\left\{ p^0_{0},p^0_{1},\cdots p^0_{2^{n}-1}\right\} $,
it is more complicated to determine the asymptotic state, however, Theorem \ref{Thm:maxdist} applies. In this case, the probabilities could be grouped in
different blocks of consecutive probabilities where in each block, the distance between any two
consecutive $p^0_i$ is less than $2\epsilon$ and is greater between two different
blocks.
Using the theorem, we can see that the distance between the
probabilities in each block would increase to $2\epsilon$. This
also implies that two neighbouring blocks may merge together. To
find the asymptotic state, we can go through the expansion and merger
of all the blocks until the final asymptotic state is found.
The asymptotic state would be a combination of different blocks where
$d^\infty_i  = 2\epsilon$ inside the blocks and is greater than that between
the blocks.

\end{document}